\definecolor{plot0}{HTML}{004488}
\definecolor{plot1}{HTML}{DDAA33}
\definecolor{plot2}{HTML}{BB5566}
\definecolor{plot3}{HTML}{000000}
\definecolor{plot4}{HTML}{AAAAAA}
\DeclarePairedDelimiter{\abs}{\vert}{\vert}
\DeclarePairedDelimiter{\floor}{\lfloor}{\rfloor}
\newcommand*{\imag}{\ensuremath{\mathrm{j}}}
\newcommand*{\naturals}{\ensuremath{\mathds{N}}}
\newcommand*{\unif}{\ensuremath{\mathrm{unif}}}
\let\set\mathcal
\DeclareMathOperator{\vd}{vd}
\let\setintersect\cap
\DeclareMathOperator{\hilbert}{\mathcal{H}}
\pgfplotsset{compat=newest}
\DeclareSIUnit{\dBm}{dBm}
\newcommand{\todo}[2][]{
	\if\relax\detokenize{#1}\relax
	{\color{red}[TODO: #2]}%
	\else
	{\color{red}[TODO (#1): #2]}%
	\fi
}
\theoremstyle{plain}%
\newtheorem{thm}{Theorem}
\newtheorem{lem}{Lemma}
\theoremstyle{definition}
\newtheorem{prob}{Problem Statement}
\newtheorem*{prob*}{Problem Statement}
\theoremstyle{remark}
\newtheorem{rem}{Remark}
\newtheorem*{rem*}{Remark}
\newtheoremstyle{example}{\topsep}{\topsep}{}{}{\itshape}{.}{ }{}
\theoremstyle{example}
\newtheorem{example}{Example}
\newtheorem*{example*}{Example}
\algnewcommand\algorithmictry{\textbf{try}}
\algnewcommand\algorithmicendtry{\textbf{end try}}
\algnewcommand\algorithmicexcept{\textbf{except}}
\algnewcommand{\algorithmiclongcomment}[1]{\textcolor{gray}{#1}}
\algrenewcommand{\algorithmiccomment}[1]{\hfill\algorithmiclongcomment{//~#1}}
\algnewcommand{\LongComment}[1]{\algorithmiclongcomment{/*~#1~*/}}
\algrenewcommand{\algorithmicindent}{1em}
\newcommand*{\len}[1][]{%
	\if\relax\detokenize{#1}\relax
	\ensuremath{\ell}
	\else
	\ensuremath{\ell_{#1}}
	\fi
}
\newcommand*{\lenref}[1][]{%
	\if\relax\detokenize{#1}\relax
	\ensuremath{\tilde{\ell}}
	\else
	\ensuremath{\tilde{\ell}_{#1}}
	\fi
}
\newcommand*{\htx}[1][]{%
	\if\relax\detokenize{#1}\relax
	\ensuremath{h_{\text{Tx}}}
	\else
	\ensuremath{h_{\text{Tx}, #1}}
	\fi
}
\newcommand*{\hrx}[1][]{%
	\if\relax\detokenize{#1}\relax
	\ensuremath{h_{\text{Rx}}}
	\else
	\ensuremath{h_{\text{Rx}, #1}}
	\fi
}
\newcommand*{\dmin}[1][]{%
	\if\relax\detokenize{#1}\relax
	\ensuremath{d_{\text{min}}}
	\else
	\ensuremath{d_{\text{min},#1}}
	\fi
}
\newcommand*{\dmax}[1][]{%
	\if\relax\detokenize{#1}\relax
	\ensuremath{d_{\text{max}}}
	\else
	\ensuremath{d_{\text{max},#1}}
	\fi
}
\newcommand*{\domega}{\ensuremath{\Delta\omega}}
\let\dw\domega
\newcommand*{\df}{\ensuremath{\Delta f}}
\newcommand*{\dphi}{\ensuremath{\Delta\phi}}
\let\w\omega
\newcommand*{\recpower}[1][]{%
	\if\relax\detokenize{#1}\relax
	\ensuremath{P_{r}}
	\else
	\ensuremath{P_{r,#1}}
	\fi
}
\newcommand*{\sumpower}[1][]{%
	\if\relax\detokenize{#1}\relax
	\ensuremath{P_{s}}
	\else
	\ensuremath{P_{s,#1}}
	\fi
}
\newcommand*{\users}{\ensuremath{\set{U}}}
\newcommand*{\frequencies}{\ensuremath{\set{F}}}
\newcommand*{\profits}{\ensuremath{\set{P}}}
\newcommand*{\assignments}{\ensuremath{\set{A}}}
\title{Multi-User Frequency Assignment for\\Ultra-Reliable mmWave Two-Ray Channels}
\author{%
	\IEEEauthorblockN{Karl-Ludwig Besser and Eduard A. Jorswieck}
	\IEEEauthorblockA{Institute for Communications Technology\\Technische Universit\"at Braunschweig, Germany\\Email: \{k.besser, e.jorswieck\}@tu-bs.de}
	\and
	\IEEEauthorblockN{Justin P. Coon}
	\IEEEauthorblockA{Department of Engineering Science\\University of Oxford, U.\,K.\\ Email: justin.coon@eng.ox.ac.uk}
}
\def\thanksfootnote{\gdef\@thefnmark{}\@footnotetext}
\begin{document}
\maketitle

\begin{abstract}\noindent
	We consider a multi-user two-ray ground reflection scenario with unknown distances between transmitter and receivers.
	By using two frequencies per user in parallel, we can mitigate possible destructive interference and ensure ultra-reliability with only very limited knowledge at the transmitter.
	In this work, we consider the problem of assigning two frequencies to each receiver in a multi-user communication system such that the average minimum receive power is maximized.
	In order to solve this problem, we introduce a generalization of the quadratic multiple knapsack problem to include heterogeneous profits and develop an algorithm to solve it.
	Compared to random frequency assignment, we report a gain of around 6\,dB in numerical simulations.
\end{abstract}
\begin{IEEEkeywords}
	Ultra-reliable communications, Two-ray ground reflection, Knapsack problems, mmWave, Worst-case design.
\end{IEEEkeywords}
\thanksfootnote{The work of E. Jorswieck is partly supported by the Federal	Ministry of Education and Research Germany (BMBF) as part of the 6G Research and Innovation Cluster 6G-RIC under Grant 16KISK020K. The work of J. Coon is supported by the EPSRC under grant number EP/T02612X/1.}
\glsresetall

\section{Introduction}\label{sec:introduction}
Reliability is a major requirement for many modern applications of wireless communication systems~\cite{Saad2020,Park2022}.
In particular, this includes autonomous vehicles, e.g., self-driving cars and \glspl{uav}.
It is therefore of great interest to develop techniques, which enable ultra-reliable communications.
This is especially important for scenarios where only limited information, e.g., \gls{csi}, is available at the communication parties, e.g., due to high mobility or in \gls{fdd} systems.

Among others, it has been observed that negative dependency between channel gains can significantly improve reliability~\cite{Haber1974,Besser2020twc,Besser2021zoc}.
The basic idea is to establish diversity and ensure that always one communication link is available, if the others fail.
In the following, we will use this idea to develop a simple frequency diversity scheme that enables ultra-reliable communications in two-ray ground reflection scenarios.
In this two-ray model, it is assumed that only one significant multipath component exists in addition to a \gls{los} connection.
The second component is typically caused by a single reflection on a ground surface.
This could occur in flat outdoor terrain~\cite{Weiler2015}, on large concrete areas, e.g., airports~\cite{Naganawa2017}, and for a \gls{uav} flying above water~\cite{Matolak2017,Chiu2021}.
It has also been observed that the two-ray model can be appropriate for \gls{v2v} communication scenarios~\cite{Sommer2012,Farzamiyan2020}.
This includes high frequency bands like \gls{mmwave}~\cite{Guan2017,Khawaja2020,Zochmann2017}.

In general, the curvature of the Earth's surface needs to be considered for long-distance outdoor settings and accurate models like the curved-Earth model~\cite{Matolak2015,Matolak2017,Parsons2001} exist.
However, when considering relatively short distances, the flat-Earth model is a valid approximation~\cite{Matolak2017,Parsons2001}, which we adopt throughout this work.

When varying the distance between transmitter and receiver, the relative phase of the two received signal components varies and they may interfere constructively or destructively.
A destructive interference causes a drop of receive power, which in turn could cause an outage of the communication link.
In order to mitigate drops of the signal power on one frequency, a second frequency can be used in parallel.
The use of multiple frequencies in parallel to create diversity and improve the reliability in ground reflection scenarios has already been proposed in \cite{Haber1974} and \cite{Berger1972diversity}.
In \cite{Capriglione2015}, it is analyzed and experimentally verified that frequency diversity improves the performance of distance measurements in outdoor ground reflection scenarios.
Instead of using multiple frequencies in parallel, it was shown experimentally in \cite{Naganawa2017} that using multiple antennas and carefully choosing the spacing between them can also improve the received power.

In practical communication systems, we are often given a fixed set of available frequencies.
If there exist multiple users with unknown distances, the following questions immediately arises: \emph{how should the frequencies be assigned to the users in order to maximize the worst-case receive power?}
Such an optimization problem of assigning discrete items to users belongs to the broad class of knapsack problems~\cite{Kellerer2004}.

In this work, we consider the problem of assigning each user in a multi-user communication system two frequencies such that the average worst-case receive power is maximized.
Our main contributions are summarized as follows.
\begin{itemize}
	\item We analyze the worst-case received power for a two-ray ground reflection model with unknown distance between transmitter and receiver when employing only a single frequency (Section~\ref{sub:single-freq}) and when employing two frequencies in parallel (Section~\ref{sub:two-freq}).
	\item We formulate the problem of assigning frequencies to users with unknown distances as a knapsack problem. (Section~\ref{sec:freq-assign-qmkp})
	\item \vspace*{-.08in}In order to solve this problem, we introduce a novel generalization of the standard \gls{qmkp} to support heterogeneous profits. We then present an algorithm to find a solution to it. (Section~\ref{sec:qmkp-hp})
	\item Finally, we demonstrate the effectiveness of our proposed solution by numerical examples and report a gain of up to \SI{6}{\decibel} compared to a random frequency assignment. (Section~\ref{sec:numerical-example})
\end{itemize}

In \cite{Besser2022tworay}, we analyze a similar two-ray model with two frequencies in parallel.
However, we only consider a single user and focus on optimizing the frequency spacing instead of assuming a fixed set of available frequencies.

\section*{Notation}
An overview of the most commonly used variable notation can be found in Table~\ref{tab:notation}.

\vspace*{-.05in}
\begin{table}[!ht]
	\renewcommand*{\arraystretch}{1.2}
	\caption{Notation of the Most Commonly Used Variables and System Parameters}\vspace*{-.08in}
	\label{tab:notation}
	\begin{tabularx}{\linewidth}{lX}
		\toprule
		$\users=\{1, 2, \dots{}, K\}$ & Set of users\\
		$\frequencies=\{f_1, f_2, \dots{}, f_N\}$ & Set of available frequencies\\
		$\profits=\{P_1, P_2, \dots{}, P_K\}$ & Profit $P_u$ for user $u$\\
		$\assignments=\{\assignments_1, \dots{}, \assignments_K\}$ & Set $\assignments_u$ of assigned frequencies to user $u$\\
		$d_u$ & Distance between transmitter and receiver~$u$ (on the ground) [$\si{\meter}$]\\
		$\htx$ & Height of the transmitter [$\si{\meter}$]\\
		$\hrx[u]$ & Height of receiver~$u$ [$\si{\meter}$]\\
		$\len[u]$ & Length of the \gls{los} path to user~$u$ [$\si{\meter}$]\\
		$\lenref[u]$ & Total length of reflection path to user~$u$ [$\si{\meter}$]\\
		$c$ & Speed of light $=\SI{299792458}{\meter\per\second}$\\
		$\omega = 2\pi f$ & (Angular) frequency ([$\si{\radian\per\second}$]) [$\si{\hertz}$]\\
		$P_{t}$ & Transmit power [$\si{\watt}$]\\
		$P_{r}$ & Receive power (single frequency) [$\si{\watt}$]\\
		$\sumpower$ & Received sum power (two frequencies) [$\si{\watt}$]\\
		$\underline{P_{s}}$ & Lower bound of the receive sum power [$\si{\watt}$]
		\\
		$d_k$ & Distance at which the $k$-th local minimum of the receive power occurs [\si{\meter}]\\
		$\dw=2\pi\df=\omega_2-\omega_1$ & (Angular) frequency spacing ([$\si{\radian\per\second}$]) [$\si{\hertz}$]\\
		\bottomrule
	\end{tabularx}
\end{table}

In order to simplify the notation, we will omit variables on which functions depend when their value is clear from the context, e.g., we will write $f(x)$ instead of $f(x, y)$ when the value of $y$ is fixed.

Since the angular frequency $\omega=2\pi f$ is a simple scaling of the frequency $f$, we will treat them somewhat interchangeably.
Especially for calculations, it is more convenient to use $\w$, while $f$ is relevant for actual system design.
We will therefore use the frequency $f$ for the numerical examples while expressing all formulas in terms of the angular frequency $\w$.

The uniform distribution on the interval $[a,b]$ is denoted as $\unif[a,b]$.

\section{System Model and Problem Formulation}\label{sec:system-model}
We consider a multi-user communication system with $K$ single-antenna users at unknown distances $d_u$ from the single-antenna transmitter.
For each user $u$, only a range of possible distances is known, i.e., it is known that $d_u\in[\dmin[u], \dmax[u]]$.
An illustration can be found in Fig.~\ref{fig:illustration-multi-user-distances}.
It should be noted that the intervals of the individual users may overlap or even be the same.
\begin{figure}
	\centering
	\begin{tikzpicture}
	\node[draw,circle,fill=black, label=left:{Transmitter}] (tx) at (0, 0) {};
	
	\node[draw, circle, label=left:{Receiver $u$}] (rx1) at (50:3) {};
	\draw[|<->|] (tx) -- node[below right] {$d_u$} (rx1);
	
	\draw[plot0, dashed,thick] (0:1) arc (0:170:1);
	\draw[plot1, dashed,thick] (0:4) arc (0:180:4);
	
	\draw[|<->|,plot0] (tx) -- node[below] {$\dmin[u]$} (0:1);
	\draw[|<->|,plot1] (tx) -- node[left] {$\dmax[u]$} (120:4);

\end{tikzpicture}
	\vspace*{-1em}
	\caption{Illustration of the considered scenario. Each user $u$ is at an unknown distance~$d_u$ from the transmitter within the known interval $[\dmin[u], \dmax[u]]$.}
	\label{fig:illustration-multi-user-distances}
\end{figure}
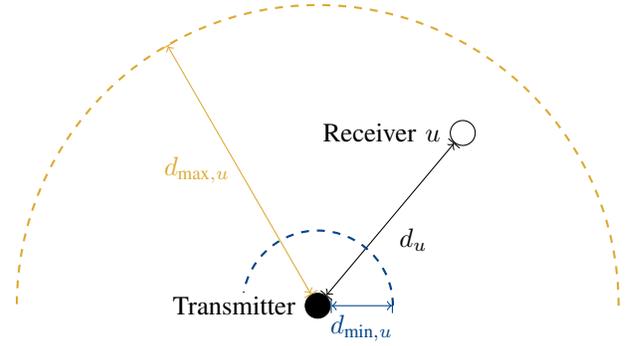
The transmitter is fixed at height $\htx$ above the ground while the users are located at heights $\hrx[u]$ above the ground.

The scenario is considered to be in a flat terrain, e.g., \glspl{uav} flying above a flat water surface.
Based on this, the propagation environment is approximated as a plane reflecting ground surface and therefore modeled by the classical two-ray ground reflection model~\cite[Chap.~4.6]{Rappaport2002}.
This geometrical model for a single receiver is depicted in Fig.~\ref{fig:two-ray-model}.

Based on the setup, it can be seen that the transmitted signal is propagated via two separate paths to the receiver.
On the one hand, there exists a \gls{los} propagation with path length $\len_u$.
On the other hand, the signal is also reflected by the ground, which leads to the second component.
The total length of the second ray is $\lenref_u$.
Finally, these two components superimpose at the receiver.

From basic trigonometric considerations, the path lengths can be calculated as
\begin{align}
	\len_u^2 &= (\htx-\hrx[u])^2 + d_u^2\\
	\lenref_u^2 &= (\htx+\hrx[u])^2 + d_u^2\,.
\end{align}
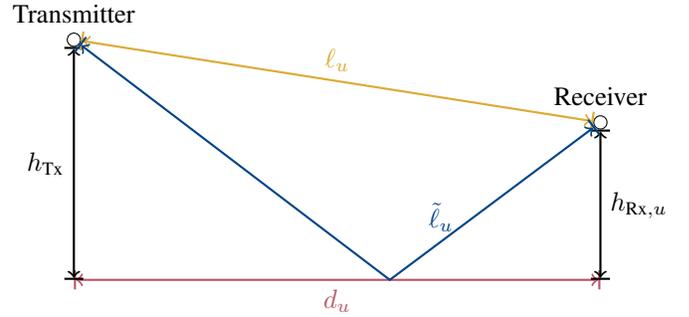
\begin{figure}
	\centering
	\begin{tikzpicture}
	\node[coordinate] (bottomleft) {};
	\node[right=7 of bottomleft,coordinate] (bottomright) {};
	
	\draw[thick,plot2,|<->|] (bottomleft) -- node[below]{$d_u$} (bottomright);
	
	\node[above=3.1 of bottomleft,label={Transmitter},draw,circle,inner sep=0pt,minimum size=5pt] (anttx) {};
	\draw[thick,|<->|] (bottomleft) -- node[left]{$\htx$} (anttx);
	
	\node[above=2 of bottomright,label={Receiver},draw,circle,inner sep=0pt,minimum size=5pt] (antrx) {};
	\draw[thick,|<->|] (bottomright) -- node[right]{$\hrx[u]$} (antrx);
	
	\node[coordinate] at ($(bottomleft)!0.6!(bottomright)$) (reflection) {};
	
	\draw[plot1,thick,|<->|] (anttx) -- node[above]{$\len[u]$} (antrx);
	\draw[plot0,thick,|<->|] (anttx) -- (reflection.center) -- node[above,near start]{$\lenref[u]$} (antrx);
\end{tikzpicture}
	\vspace*{-1.5em}
	\caption{Geometrical model of the considered two-ray ground reflection scenario. The transmitter is placed at height $\htx$ above the ground. The receiver~$u$ is located at height $\hrx[u]$ at a (ground) distance $d_u$ away from the transmitter. The \gls{los} path and reflection path have lengths $\len[u]$ and $\lenref[u]$, respectively.\vspace*{-1.5em}}
	\label{fig:two-ray-model}
\end{figure}

When transmitting on a single frequency $\w=2\pi f$, the received power $\recpower$ of user $u$ at distance $d_u$ is given as~\cite[Eq.~(2)]{Haber1974}, \cite[Chap.~4.6]{Rappaport2002}
\begin{multline}\label{eq:rec-power-single-freq}
	\recpower(d_u, \omega; \htx, \hrx[u], P_t) = \\P_t\left(\frac{c}{2\omega}\right)^2\left(\frac{1}{\len_u^2} + \frac{1}{\lenref_u^2} - \frac{2}{\len_u\lenref_u}\cos\underbrace{\left[\frac{\omega}{c}\left(\lenref_u-\len_u\right)\right]}_{\dphi}\right)
\end{multline}
with the additional notation from Table~\ref{tab:notation}\footnote{In order to simplify the notation, we will omit variables on which functions depend when their value is clear from the context, e.g., we will write $P_r(d)$ instead of $P_r(d, \omega)$ when the value of $\omega$ is fixed.}.
For simplicity, we assume a perfectly reflecting ground throughout this work.
However, all of the calculations can be adapted to include an additional attenuation due to the absorption on the ground.

It is well-known that the two components can interfere constructively or destructively at the receiver, depending on the distance $d$.
This leads to local minima in the receive power at certain distances, which in turn can lead to outages in the transmission.
In order to mitigate these drops in receive power, we propose to use a second frequency in parallel.
However, there exists a set $\frequencies=\{f_1, \dots{}, f_N\}$ of $N$ given frequencies $f_i$ which can be assigned to the individual users.
This directly leads to the question about how the frequencies should be assigned to the users such that the overall reliability of the system is maximized.
The exact problem formulation is described in the following.

\subsection{Problem Formulation}
Throughout the following, we will consider a two-ray ground reflection scenario where the height of the transmitter $\htx$, the height of the receivers $\hrx[u]$ are fixed and known to all parties.
In contrast, the distances between transmitter and receivers $d_u$ may vary and are unknown at the transmitter.
Only the ranges of possible distances are known, i.e., $d_u\in[\dmin[u], \dmax[u]]$.

Based on this knowledge, the transmitter needs to assign distinct frequencies $f_i$ from a set $\frequencies$ of given frequencies to the users.
In order to ensure a high reception quality at any distance, the transmitter may employ up to two frequencies in parallel for each user, such that we can compensate for possible destructive interference of the two rays.
This leads to the following problem.

\begin{prob}\label{prob:opt-prob}
	Since the transmitter only knows the range of $d_u$, i.e., that $d_u\in[\dmin[u], \dmax[u]]$, we try to assign the frequencies such that the average \emph{worst-case} receive power of the system is maximized.
	This optimization problem can be formulated as
	\begin{equation}\label{eq:opt-problem}
		\max_{\substack{\assignments\\\abs{\assignments_{{u}}}\leq 2}}\sum_{u=1}^{K}\min_{d_u\in[\dmin[u], \dmax[u]]} \sum_{f_i\in\assignments_{{u}}}\negthickspace\recpower\left(d_u, f_i; \htx, \hrx[u], \frac{P_t}{\abs{\assignments_{{u}}}}\right),
	\end{equation}
	where $\assignments_{{u}}$ is the set of frequencies that are assigned to user~$u$ and $\assignments=\{\assignments_{1}, \dots{}, \assignments_{K}\}$ is the overall assignment.
	Since each frequency may only be assigned to one user, we additionally have that $\assignments_{{i}}\setintersect\assignments_{{j}}=\emptyset$ for $i\neq j$.
	When using multiple frequencies, the transmit power needs to be split up, which leads to a transmit power $P_t/\abs{\assignments_{{u}}}$ for each individual frequency of user $u$.
\end{prob}
\section{Minimum Receive Power}\label{sec:min-receive-power}
In order to solve \eqref{eq:opt-problem}, we first need to analyze the minimum receive power for a given interval of distances $[\dmin, \dmax]$.

In the following, we start with the case that only a single frequency is used and later extend the results to two frequencies, which are used in parallel.

Throughout this section, we focus on a single user $u$ and therefore omit the index $u$ at the variables.

\subsection{Single Frequency}\label{sub:single-freq}
As mentioned above, it is possible that the two rays interfere destructively which causes a drop in the receive power.
Therefore, we first investigate the distances at which such drops in receive power occur.

\subsubsection{Destructive Interference}
From \eqref{eq:rec-power-single-freq}, it can be seen that we get a (local) minimum of the receive power when the direct and reflected signals interfere destructively~\cite{Loyka2001}.
This occurs whenever the phase difference $\dphi$ is a multiple of $2\pi$, i.e.,
\begin{equation}\label{eq:condition-phase-shift-minimum-single-freq}
	\dphi = \frac{\omega}{c}\left(\lenref-\len\right) = 2\pi k, \quad k\in\naturals_0\,.
\end{equation}
It can easily be verified that $\dphi$ is a decreasing function in $d$ and it therefore follows that
\begin{align*}
	\dphi_{\text{max}}
	&= \lim\limits_{d\to 0} \dphi\\
	&= \frac{2\omega\min\{\htx,\hrx\}}{c} %
\end{align*}
and
\begin{equation*}
	\lim\limits_{d\to\infty} \dphi = 0\,.
\end{equation*}
This shows that $\dphi$ decreases from a finite value $\dphi_{\text{max}}$ to $0$.
Hence, there always exists a finite number of multiples of $2\pi$ with $k_{\text{max}} = \floor{\frac{\dphi_{\text{max}}}{2\pi}}$, i.e., there exist $k_{\text{max}}$ local minima of the receive power.

The distance $d_k$ at which the $k$-th minimum occurs, is given by solving \eqref{eq:condition-phase-shift-minimum-single-freq} as
\begin{equation}\label{eq:distance-k-min-single-freq}
	d_k^2(\omega) = \frac{\left({(c \pi k)^2 - (\omega \hrx)^2}\right) \left({(c \pi k)^2 - (\omega \htx)^2}\right)}{(\omega c \pi k)^2}\,
\end{equation}
with $k=1, \dots{}, k_{\text{max}}$.

\begin{example}\label{ex:single-freq-local-min}
	An illustration of $\dphi$ can be found in Fig.~\ref{fig:phase-shift-single-freq}.
	The parameters are set to $\w/c=10$ ($f=\SI{477}{\mega\hertz}$), $\htx=\SI{10}{\meter}$, and $\hrx=\SI{1.5}{\meter}$.
	Additionally, we indicate the distances $d_k$.
	Since $k_{\text{max}}=4$, there exist four $d_k$ at which a local minimum occurs.
	For the selected parameters, they are evaluated to $d_1=\SI{46.7}{\meter}$, $d_2=\SI{21.6}{\meter}$, $d_3=\SI{12.3}{\meter}$, and $d_4=\SI{6.5}{\meter}$.
	The corresponding received power from \eqref{eq:rec-power-single-freq} is shown in Fig.~\ref{fig:rec-power-single-freq}.
	It can be seen that a minimum occurs at $d=d_k$, with the lowest peak being at the smallest $k$, i.e., $k=1$, which corresponds to the highest distance of all $d_k$.
	For comparison, we additionally show the received power for $f_2=\SI{2.4}{\giga\hertz}$ in Fig.~\ref{fig:rec-power-single-freq}.
	\begin{figure}
		\centering
		\begin{tikzpicture}%
	\begin{axis}[
		width=.95\linewidth,
		height=.25\textheight,
		xlabel={Distance $d$ [\si{\meter}]},
		ylabel={Phase Shift $\Delta\phi$},
		xlabel near ticks,
		ylabel near ticks,
		xmin=1,
		xmax=1000,
		xmode=log,
		ymin=0,
		cycle list name=colorcycle,
		ytick distance=10,
		xtick={1, 10, 100, 1000},
		xticklabels={$10^0$, {}, $10^2$, $10^3$},
		extra x ticks={46.66, 21.64, 12.33, 6.47},
		extra x tick labels={$d_1$, $d_2$, $d_3$, $d_4$},
		extra y ticks={6.28, 12.57, 18.85, 25.13},
		extra y tick labels={$2\pi$, $4\pi$, $6\pi$, $8\pi$},
		extra x tick style={grid=none},
		extra y tick style={grid=none},
		ymajorgrids,
		xmajorgrids,
		xminorgrids,
		grid style={line width=.1pt, draw=gray!20},
		major grid style={line width=.25pt,draw=gray!30},
		]

		\addplot+[domain=1:1000] {10*(sqrt(132.25+x^2)-sqrt(72.25+x^2))};
		\addplot[black,thick,dashed] coordinates {(1, 6.28) (46.66, 6.28) (46.66, 0)};
		\addplot[black,thick,dashed] coordinates {(1, 12.57) (21.64, 12.57) (21.64, 0)};

		\addplot[black,thick,dashed] coordinates {(1, 18.85) (12.33, 18.85) (12.33, 0)};
		\addplot[black,thick,dashed] coordinates {(1, 25.13) (6.47, 25.13) (6.47, 0)};
	\end{axis}
\end{tikzpicture}
		\vspace*{-1em}
		\caption{%
			Relative phase shift $\dphi$ from \eqref{eq:condition-phase-shift-minimum-single-freq} for $\omega/c=10$, $\htx=\SI{10}{\meter}$, and $\hrx=\SI{1.5}{\meter}$.
			Additionally the distances $d_k$, $k=1, \dots{}, 4$, from \eqref{eq:distance-k-min-single-freq} are indicated.
			(Example~\ref{ex:single-freq-local-min})
		}
		\label{fig:phase-shift-single-freq}
	\end{figure}
\end{example}

\subsubsection{Worst-Case Receive Power}
When $d$ is anywhere in the interval $[\dmin, \dmax]$, the lowest peak is at the smallest $k$ such that $d_k$ is still in $[\dmin, \dmax]$.
However, in order to determine the global minimum of the receive power in $[\dmin, \dmax]$, the boundary points $\dmin$ and $\dmax$ need to be taken into account.
This leads to the following result of the minimal receive power when only a single frequency is used.

\begin{thm}[Minimal Receive Power (Single Frequency)]\label{thm:min-rec-power-single-freq}
	Consider the described two-ray ground reflection model with a single frequency $\omega=2\pi f$.
	The distance $d$ between transmitter and receiver is in the interval $[\dmin, \dmax]$.
	The minimal receive power is then given as
	\begin{multline}\label{eq:min-rec-power-single-freq}
		\min_{d\in[\dmin, \dmax]} P_{r}(d) = \\\min \left\{P_r(\dmin), P_r(\dmax), P_r\left(\max_{d_k\in[\dmin, \dmax]} d_k\right)\right\}\,.
	\end{multline}
\end{thm}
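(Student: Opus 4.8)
The plan is to use compactness together with the structure of the interior local minima already identified. Since $P_r$ is continuous on the compact interval $[\dmin,\dmax]$, it attains its global minimum there, and any minimizer is either one of the two endpoints or an interior point. First I would invoke the destructive-interference analysis preceding the theorem: the only interior local minima of $P_r(d)$ occur at the distances $d_k$ that solve \eqref{eq:condition-phase-shift-minimum-single-freq}, i.e.\ where the phase difference $\dphi$ equals a multiple of $2\pi$. If the global minimizer lies in the interior it must be a local minimum, hence one of these $d_k$. This reduces the candidate set to the two boundary points $\dmin,\dmax$ together with those $d_k$ that lie inside $[\dmin,\dmax]$.

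Next I would collapse the (possibly several) interior candidates to a single one. Evaluating \eqref{eq:rec-power-single-freq} at a local minimum, where $\cos\dphi = 1$, the bracketed term becomes a perfect square, giving the envelope value
\begin{equation*}
	P_r(d_k) = P_t\left(\frac{c}{2\omega}\right)^2\left(\frac{1}{\len(d_k)}-\frac{1}{\lenref(d_k)}\right)^2 .
\end{equation*}
Writing $\tfrac{1}{\len}-\tfrac{1}{\lenref}=\tfrac{\lenref-\len}{\len\,\lenref}$ and using that, as $d$ grows, the gap $\lenref-\len$ shrinks toward $0$ while the product $\len\,\lenref$ grows, I would show that this envelope is strictly decreasing in $d$. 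Consequently $P_r(d_k)$ is smallest at the largest admissible distance, namely $\max_{d_k\in[\dmin,\dmax]}d_k$, so every other interior minimum can be discarded; this also formalizes the observation in Example~\ref{ex:single-freq-local-min} that the deepest dip sits at the smallest index $k$ (equivalently the largest $d_k$).

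Finally I would assemble the reduced candidate set: the global minimum equals the smallest of $P_r(\dmin)$, $P_r(\dmax)$, and $P_r\!\left(\max_{d_k\in[\dmin,\dmax]}d_k\right)$, which is exactly \eqref{eq:min-rec-power-single-freq}. The main obstacle I expect is the monotonicity step, namely arguing cleanly that the depth of the local minima is monotone in distance, i.e.\ that $d\mapsto(\len^{-1}-\lenref^{-1})$ is decreasing; the closed forms of $\len$ and $\lenref$ make this a routine but essential estimate. A secondary point to address is the degenerate case in which no $d_k$ falls in $[\dmin,\dmax]$: there the third term is vacuous, $P_r$ is monotone on the interval, and the minimum is attained at one of the endpoints, so the formula still holds under the natural convention for the empty maximum.
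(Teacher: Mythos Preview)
Your proposal is correct and, in fact, more detailed than what the paper offers: the paper states Theorem~\ref{thm:min-rec-power-single-freq} without an explicit proof, relying instead on the preceding destructive-interference analysis (that interior local minima sit at the $d_k$ solving \eqref{eq:condition-phase-shift-minimum-single-freq}) together with the remark in Example~\ref{ex:single-freq-local-min} that the lowest peak occurs at the smallest index $k$, i.e., the largest $d_k$. Your compactness-plus-local-minima reduction and the envelope-monotonicity argument for $d\mapsto(\len^{-1}-\lenref^{-1})^2$ supply precisely the justification the paper leaves implicit, so the two are fully aligned in spirit; you simply make rigorous what the paper treats as evident from the preceding discussion.
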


\begin{example}[Single Frequency Worst-Case Receive Power]\label{ex:single-freq-worst-case}
	For a numerical example, we take the parameters that are used in Example~\ref{ex:single-freq-local-min} and additionally compare it to a higher frequency scenario.
	In particular, we fix $\htx=\SI{10}{\meter}$, $\hrx=\SI{1.5}{\meter}$, and $P_t=1$.
	The receiver is assumed to be randomly located at a distance between $\dmin=\SI{30}{\meter}$ and $\dmax=\SI{100}{\meter}$ from the transmitter.
	
	For the lower frequency $\omega_1=10c$, i.e., $f_1=\SI{477}{\mega\hertz}$, we get $P_r(\dmin, \omega_1)=\SI{-50}{\decibel}$, $P_r(\dmax, \omega_1)=\SI{-60}{\decibel}$, and $P_r(d_1(\omega_1), \omega_1)=\SI{-97}{\decibel}$ with $d_1(\omega_1)=\SI{46.7}{\meter}$.
	Based on~\eqref{eq:min-rec-power-single-freq} from Theorem~\ref{thm:min-rec-power-single-freq}, we determine that the worst-case receive power is equal to \SI{-97}{\decibel}.
	
	In contrast, for a higher frequency $f_2=\SI{2.4}{\giga\hertz}$, there are multiple local minima at locations $d_k$, which lie in the interval $[\dmin, \dmax]$.
	According to \eqref{eq:min-rec-power-single-freq}, we need to determine the maximum of all $d_k\in[\dmin, \dmax]$.
	For the considered parameters, this is calculated to $d_3(\omega_2)=\SI{79.4}{\meter}$ with $P_r(d_3(\omega_2), \omega_2)=\SI{-125}{\decibel}$.
	The received powers at the boundary points are evaluated to $P_r(\dmin, \omega_2)=\SI{-64}{\decibel}$ and $P_r(\dmax, \omega_2)=\SI{-75}{\decibel}$.
	Hence, the worst-case receive power when using only frequency $f_2$ is $\SI{-125}{\decibel}$.
	\begin{figure}
		\centering
		\begin{tikzpicture}%
	\begin{axis}[
		width=.93\linewidth,
		height=.25\textheight,
		xlabel={Distance $d$ [\si{\meter}]},
		ylabel={Received Power $P_r$ [\si{\decibel}]},
		xlabel near ticks,
		ylabel near ticks,
		xmin=1,
		xmax=1000,
		xmode=log,
		ymax=-40, %
		ymin=-130, %
		legend pos=south west,
		legend cell align=left,
		cycle list name=colorcycle,
		extra x ticks={46.66, 21.64},
		extra x tick labels={$d_1$, $d_2$},
		extra x tick style={grid=none},
		extra y tick style={grid=none},
		ymajorgrids,
		xmajorgrids,
		xminorgrids,
		grid style={line width=.1pt, draw=gray!20},
		major grid style={line width=.25pt,draw=gray!30},
		]
		\addplot+[mark repeat=60] table[x=distance, y=power] {data/power_single-4.771345E+08-t10.0-r1.5.dat};
		\addlegendentry{$f_1=\SI{477}{\mega\hertz}$};
		
		\addplot+[mark repeat=80] table[x=distance, y=power] {data/power_single-2.400000E+09-t10.0-r1.5.dat};
		\addlegendentry{$f_2=\SI{2.4}{\giga\hertz}$};
		
		\addplot[black,dashed] coordinates {(46.66, -40) (46.66, -130)};
		\addplot[black,dashed] coordinates {(21.64, -40) (21.64, -130)};
	\end{axis}
\end{tikzpicture}
		\vspace*{-1em}
		\caption{Received power $P_{r}(d)$ from \eqref{eq:rec-power-single-freq} when using a single frequency $f$ with system parameters $\htx=\SI{10}{\meter}$, $\hrx=\SI{1.5}{\meter}$, and $P_t=1$ for $f=f_1=\SI{477}{\mega\hertz}$ and $f=f_2=\SI{2.4}{\giga\hertz}$. Additionally, the distances $d_1(\omega_1)=\SI{46.7}{\meter}$ and $d_2(\omega_1)=\SI{21.6}{\meter}$ from \eqref{eq:distance-k-min-single-freq} are indicated. (Examples~\ref{ex:single-freq-local-min} and \ref{ex:single-freq-worst-case})}
		\label{fig:rec-power-single-freq}
	\end{figure}
\end{example}

\subsection{Two Frequencies}\label{sub:two-freq}
Since the drops in receive power due to destructive interference of the two rays cannot be avoided when a single frequency is used, we will now employ a second frequency to mitigate these minima.
As described in Problem Statement~\ref{prob:opt-prob}, we aim to optimize the frequency assignment such that the minimum receive power is maximized.

\begin{figure*}[t]%
	\begin{equation}\label{eq:rec-power-two-freq}
		\sumpower = \frac{P_t}{2}\left(\frac{c}{2}\right)^2 \left[\left(\frac{1}{\omega_1^2}+\frac{1}{\omega_2^2}\right)\left(\frac{1}{\len^2}+\frac{1}{\lenref^2}\right)-\frac{2}{\len\lenref}\left(\frac{\cos\left(\frac{\omega_1}{c}(\lenref-\len)\right)}{\omega_1^2} + \frac{\cos\left(\frac{\omega_2}{c}(\lenref-\len)\right)}{\omega_2^2}\right)\right]%
	\end{equation}
	\begin{equation}\label{eq:rec-power-sum-lower-bound}
		\underline{\sumpower}(d, \dw; \omega_1, \htx, \hrx, P_t) = \frac{P_t}{2}\left(\frac{c}{2}\right)^2 \left[\left(\frac{1}{\omega_1^2}+\frac{1}{\omega_2^2}\right)\left(\frac{1}{\len^2}+\frac{1}{\lenref^2}\right)-\frac{2}{\len\lenref}\sqrt{\left(\frac{1}{\omega_1^2}\right)^2 + \left(\frac{1}{\omega_2^2}\right)^2 + \frac{2\cos\left(\frac{\domega}{c}(\lenref-\len)\right)}{\omega_1^2 \omega_2^2}}\right]%
	\end{equation}
	\hrulefill
	\vspace*{-1em}
\end{figure*}

The received power at distance $d$ is given as the sum power $\sumpower = P_{r}(d, \omega_1)+P_{r}(d, \omega_2)$.
For a fair comparison with the single frequency case, we assume that the total transmit power $P_t$ remains the same.
Thus, only half of the transmit power is used for the individual frequencies.
This leads to the expression for the total received power in \eqref{eq:rec-power-two-freq} {at the top of the next page}.

Since we are particularly interested in improving the worst-case performance, we will consider a lower bound on $\sumpower$ in the following.
\begin{lem}[Lower Bound on the Sum Power for Two Frequencies]\label{lem:rec-power-lower-bound-two-freq}
For the described two-ray ground reflection system using two frequencies $\omega_1, \omega_2$ in parallel, the received sum power~$P_s$, is lower bounded by $\underline{\sumpower}$ from \eqref{eq:rec-power-sum-lower-bound} {at the top of the next page}, where $\dw=\w_2-\w_1$ denotes the frequency spacing.
\end{lem}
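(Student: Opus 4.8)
The plan is to observe that the exact sum power $\sumpower$ in \eqref{eq:rec-power-two-freq} and the claimed bound $\underline{\sumpower}$ in \eqref{eq:rec-power-sum-lower-bound} are \emph{identical} except for the single term that is subtracted inside the bracket, and then to control that term by a short phasor (complex-exponential) argument. Concretely, both expressions carry the same positive prefactor $\frac{P_t}{2}\left(\frac{c}{2}\right)^2$, the same positive summand $\left(\frac{1}{\omega_1^2}+\frac{1}{\omega_2^2}\right)\left(\frac{1}{\len^2}+\frac{1}{\lenref^2}\right)$, and the same positive multiplier $\frac{2}{\len\lenref}$ in front of the subtracted term. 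Hence the whole claim $\underline{\sumpower}\le\sumpower$ reduces to the scalar inequality
\begin{equation*}
	\frac{\cos\phi_1}{\omega_1^2}+\frac{\cos\phi_2}{\omega_2^2} \le \sqrt{\frac{1}{\omega_1^4}+\frac{1}{\omega_2^4}+\frac{2\cos(\phi_2-\phi_1)}{\omega_1^2\omega_2^2}}\,,
\end{equation*}
where I abbreviate $\phi_i=\frac{\omega_i}{c}(\lenref-\len)$, so that $\phi_2-\phi_1=\frac{\domega}{c}(\lenref-\len)$ reproduces exactly the cosine argument appearing under the root in \eqref{eq:rec-power-sum-lower-bound}.

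The key step is to read the left-hand side as the real part of a sum of two phasors. Introducing the complex number
\begin{equation*}
	z = \frac{1}{\omega_1^2}\,\e^{\imag\phi_1} + \frac{1}{\omega_2^2}\,\e^{\imag\phi_2}\in\complexes\,,
\end{equation*}
one has $\operatorname{Re}(z)=\frac{\cos\phi_1}{\omega_1^2}+\frac{\cos\phi_2}{\omega_2^2}$, i.e.\ exactly the left-hand side. A direct expansion of $\abs{z}^2=\operatorname{Re}(z)^2+\operatorname{Im}(z)^2$ then collapses, using $\cos^2+\sin^2=1$ on each phasor and the angle-difference identity $\cos\phi_1\cos\phi_2+\sin\phi_1\sin\phi_2=\cos(\phi_2-\phi_1)$ on the cross term, to precisely the radicand above. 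The desired scalar inequality is then nothing but the elementary fact $\operatorname{Re}(z)\le\abs{z}$.

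Finally I would reassemble the pieces: since $\abs{z}$ equals the square root and $\abs{z}\ge\operatorname{Re}(z)$ equals the cosine sum, the bound $\underline{\sumpower}$ subtracts an at-least-as-large quantity (scaled by the same positive factor $\frac{2}{\len\lenref}$) as $\sumpower$ does, whence $\underline{\sumpower}\le\sumpower$, with equality iff $\operatorname{Im}(z)=0$. I do not expect a genuine obstacle here, as the argument is a single phasor identity together with $\operatorname{Re}(\cdot)\le\abs{\cdot}$; the only point requiring care is the bookkeeping of signs, because the phasor term enters with a minus sign, one must verify that replacing it by the larger quantity $\abs{z}$ \emph{decreases} the expression, which is exactly where the positivity of the prefactor and of $\frac{2}{\len\lenref}$ is used to fix the direction of the inequality.
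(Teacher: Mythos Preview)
Your argument is correct and, at the level of the underlying computation, coincides with the paper's: the paper bounds the oscillating part $s=\frac{\cos\phi_1}{\omega_1^2}+\frac{\cos\phi_2}{\omega_2^2}$ by the magnitude of its \emph{analytic signal} $s+\imag\hat{s}$, where $\hat{s}=\hilbert\{s\}$ is the Hilbert transform; since $\hilbert\{\cos(\cdot)\}=\sin(\cdot)$, that analytic signal is exactly your phasor $z=\frac{1}{\omega_1^2}\e^{\imag\phi_1}+\frac{1}{\omega_2^2}\e^{\imag\phi_2}$, and the envelope $\abs{s+\imag\hat{s}}$ is your $\abs{z}$. The difference is one of packaging: the paper appeals to the signal-processing notion of an envelope and leaves implicit why the envelope dominates the signal, whereas you make the inequality $\operatorname{Re}(z)\le\abs{z}$ explicit and handle the sign bookkeeping directly. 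Your route is more elementary and self-contained (no Hilbert-transform machinery needed); the paper's route has the advantage of immediately identifying $\underline{\sumpower}$ as the true lower envelope of $\sumpower$, which also explains the tightness visible in Fig.~\ref{fig:rec-power-two-freq}.
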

\begin{proof}
The proof can be found in Appendix~\ref{app:proof-rec-power-lower-bound}.
\end{proof}

Based on this lower bound and Theorem~\ref{thm:min-rec-power-single-freq}, we can immediately state the worst-case receive power in $[\dmin, \dmax]$ when the two frequencies $\w_1$ and $\w_2$ are used in parallel.
\begin{thm}[Minimal Receive Power (Two Frequencies)]\label{thm:min-rec-power-two-freq}
	Consider the described two-ray ground reflection model with two frequency $\omega_1=2\pi f_1$ and $\w_2=2\pi f_2$.
	The distance $d$ between transmitter and receiver is in the interval $[\dmin, \dmax]$.
	The minimal receive power is then given as
	\begin{multline}\label{eq:min-rec-power-two-freq}
		\min_{d\in[\dmin, \dmax]} \underline{\sumpower}(d, \dw) = \min \Bigg\{\underline{\sumpower}(\dmin, \dw),\\ \underline{\sumpower}(\dmax, \dw), \underline{\sumpower}\left(\max_{d_k\in[\dmin, \dmax]} d_k(\dw)\right)\Bigg\}\,.
	\end{multline}
\end{thm}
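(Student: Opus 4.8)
The plan is to reduce this statement to Theorem~\ref{thm:min-rec-power-single-freq} by showing that, as a function of $d$, the lower bound $\underline{\sumpower}(d,\dw)$ from \eqref{eq:rec-power-sum-lower-bound} has the same qualitative structure as the single-frequency power in \eqref{eq:rec-power-single-freq}. First I would isolate the $d$-dependence: the envelope factors $\frac{1}{\len^2}+\frac{1}{\lenref^2}$ and $\frac{1}{\len\lenref}$ are exactly those appearing in \eqref{eq:rec-power-single-freq}, and the only oscillatory dependence on $d$ enters through $\cos\!\left(\frac{\dw}{c}(\lenref-\len)\right)$ inside the square root. This cosine is of precisely the same form as the single-frequency phase term $\cos\Delta\phi=\cos\!\left(\frac{\omega}{c}(\lenref-\len)\right)$, but with the carrier $\omega$ replaced by the frequency spacing $\dw=\w_2-\w_1$.

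The next step is to locate the interior minima. For fixed $d$, the coefficient $-\frac{2}{\len\lenref}$ multiplying the square root is negative, and the square root is strictly increasing in the cosine (since $\frac{2}{\omega_1^2\omega_2^2}>0$), so $\underline{\sumpower}$ is a strictly decreasing function of $\cos\!\left(\frac{\dw}{c}(\lenref-\len)\right)$. Hence the local minima of $\underline{\sumpower}$ occur exactly where this cosine attains its maximum value $1$, i.e., where $\frac{\dw}{c}(\lenref-\len)=2\pi k$. By the same argument as in Section~\ref{sub:single-freq}, the effective phase $\frac{\dw}{c}(\lenref-\len)$ decreases monotonically from a finite maximum to $0$, so these crossing points are finite in number and are given by \eqref{eq:distance-k-min-single-freq} with $\omega$ replaced by $\dw$, which I denote $d_k(\dw)$. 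A short simplification then shows the radicand collapses: since $\frac{1}{\omega_1^4}+\frac{1}{\omega_2^4}+\frac{2}{\omega_1^2\omega_2^2}=\left(\frac{1}{\omega_1^2}+\frac{1}{\omega_2^2}\right)^2$, at $d=d_k(\dw)$ the bound reduces to $\underline{\sumpower}(d_k)=\frac{P_t}{2}\left(\frac{c}{2}\right)^2\left(\frac{1}{\omega_1^2}+\frac{1}{\omega_2^2}\right)\left(\frac{1}{\len}-\frac{1}{\lenref}\right)^2$, mirroring the single-frequency minima up to the amplitude constant.

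With the interior minima identified, I would argue that only the largest $d_k(\dw)$ must be retained. The envelope $\left(\frac{1}{\len}-\frac{1}{\lenref}\right)^2=\left(\frac{\lenref-\len}{\len\lenref}\right)^2$ is strictly decreasing in $d$, because $\lenref-\len$ decreases while $\len\lenref$ increases with $d$; therefore the deepest interior minimum among all $d_k(\dw)\in[\dmin,\dmax]$ is attained at the largest such $d_k$. Finally, since $\underline{\sumpower}(\cdot,\dw)$ is continuous on the compact interval $[\dmin,\dmax]$, its global minimum is attained either at one of these interior minima or at a boundary point; combining the deepest interior candidate $\max_{d_k\in[\dmin,\dmax]} d_k(\dw)$ with the two endpoints $\dmin,\dmax$ yields exactly \eqref{eq:min-rec-power-two-freq}.

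I expect the main obstacle to be the reduction step rather than any single estimate: I must verify that, despite the square root, the interference term of $\underline{\sumpower}$ is genuinely monotone in the same cosine that governs the single-frequency case, so that the candidate set $\{d_k(\dw)\}$ coincides in form with the single-frequency minima and Theorem~\ref{thm:min-rec-power-single-freq} transfers directly. The envelope-monotonicity claim used to discard all but the largest $d_k$ is the one quantitative point to check with care, though it is the same fact already invoked for the single-frequency result.
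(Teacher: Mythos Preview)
Your proposal is correct and follows exactly the route the paper indicates: the paper gives no explicit proof of this theorem but simply remarks that it follows ``immediately'' from Lemma~\ref{lem:rec-power-lower-bound-two-freq} and Theorem~\ref{thm:min-rec-power-single-freq}, and your argument supplies precisely the missing justification for that reduction---namely, that the only oscillatory $d$-dependence in $\underline{\sumpower}$ enters through $\cos\!\big(\tfrac{\dw}{c}(\lenref-\len)\big)$, so the single-frequency analysis applies verbatim with $\omega$ replaced by $\dw$.
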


\begin{example}[Sum Power Lower Bound]\label{ex:sum-power-lower-bound}
As an example, we show the received (sum) power from \eqref{eq:rec-power-two-freq} and the lower bound from \eqref{eq:rec-power-sum-lower-bound} for $f_1=\SI{2.4}{\giga\hertz}$, $\df=\SI{250}{\mega\hertz}$, $\htx=\SI{10}{\meter}$, and $\hrx=\SI{1.5}{\meter}$ in Fig.~\ref{fig:rec-power-two-freq}.
It can clearly be seen that the lower bound is the lower envelope of the actual received power.
Similar to the case that only a single frequency is used, the received power varies with the distance $d$ and shows both minima and maxima.
While the actual received power $\sumpower$ oscillates at a high frequency over the distance $d$, the (spatial) frequency of the lower bound $\underline{\sumpower}$ is determined by the difference in frequencies $\df=\dw/(2\pi)$.
\begin{figure}
	\centering
	\begin{tikzpicture}%
	\begin{axis}[
		width=.93\linewidth,
		height=.25\textheight,
		xlabel={Distance $d$ [\si{\meter}]},
		ylabel={Received Power $P_r$ [\si{\decibel}]},
		xlabel near ticks,
		ylabel near ticks,
		xmin=1,
		xmax=1000,
		xmode=log,
		ymax=-50,
		ymin=-120,
		cycle list name=colorcycle,
		legend cell align=left,
		legend pos=south west,
		ymajorgrids,
		xmajorgrids,
		xminorgrids,
		grid style={line width=.1pt, draw=gray!20},
		major grid style={line width=.25pt,draw=gray!30},
		]
		\addplot+[mark repeat=40] table[x=distance, y=powerSum] {data/power_sum-2.400000E+09-df2.500000E+08-t10.0-r1.5.dat};
		\addlegendentry{Sum Power $\sumpower$};
		\addplot+[mark repeat=40, dashed] table[x=distance, y=envelope] {data/power_sum-2.400000E+09-df2.500000E+08-t10.0-r1.5.dat};
		\addlegendentry{Lower Bound $\underline{\sumpower}$};

	\end{axis}
\end{tikzpicture}
	\vspace*{-1em}
	\caption{Received power for two parallel frequencies with $f_1=\SI{2.4}{\giga\hertz}$, $\df=\SI{250}{\mega\hertz}$, $\htx=\SI{10}{\meter}$, and $\hrx=\SI{1.5}{\meter}$. Both the actual value $\sumpower$ from \eqref{eq:rec-power-two-freq} and the lower bound $\underline{\sumpower}$ from \eqref{eq:rec-power-sum-lower-bound} are shown. (Example~\ref{ex:sum-power-lower-bound})}
	\label{fig:rec-power-two-freq}
\end{figure}

The drops in receive power occur at locations $d_k(\dw)$ which are calculated according to \eqref{eq:distance-k-min-single-freq} to $d_1=\SI{22.9}{\meter}$ and $d_2=\SI{7.5}{\meter}$.
For the distance interval with $\dmin=\SI{30}{\meter}$ and $\dmax=\SI{100}{\meter}$, the minimum receive power is determined by Theorem~\ref{thm:min-rec-power-two-freq} to $\min_{d\in[\dmin, \dmax]}\underline{\sumpower}=\SI{-82.9}{\decibel}$.
Recall from Example~\ref{ex:single-freq-worst-case} that this value is around $\SI{-125}{\decibel}$ when only a single frequency $f=\SI{2.4}{\giga\hertz}$ is used.
This shows that employing a second frequency can significantly improve the worst-case receive power.
\end{example}
\section{Frequency Assignment as Knapsack Problem}\label{sec:freq-assign-qmkp}
Since we are now able to calculate the worst-case receive power for user $u$ with given distance interval $[\dmin[u], \dmax[u]]$ and frequencies $f_1$ and $f_2$, we now want to reformulate the original problem from Problem Statement~\ref{prob:opt-prob} in order to solve it.

The general problem of assigning items to multiple bins (or knapsacks) is referred to as a \gls{mkp}~\cite{Kellerer2004}.
Each item~$i$ has a weight~$w_i$ and each knapsack~$u$ has a weight capacity~$c_u$.
When item~$i$ is placed in any knapsack, it yields the profit $p_i$.
In the \gls{qmkp}, there additionally exist joint profits $p_{ij}$ for placing items~$i$ and $j$ in the same knapsack.

This basic setup closely resembles our considered problem of assigning frequencies to users.
In particular, we have the following correspondences.
The items to be assigned are the $N$ available frequencies $f_i$, $i=1, \dots{}, N$.
The knapsacks, which receive the items, correspond to the individual users $\users = \{1, 2, \dots{}, K\}$.
Since we limit the number of frequencies that can be assigned to each user to two, we set the weight capacity~$c_u$ of each user to $c_u=2$ and the weight of each frequency to $w_i=1$.
The profit $p_i$ of assigning frequency $f_i$ to a user corresponds to the minimum receive power from~\eqref{eq:min-rec-power-single-freq}.
When adding a second frequency $f_j$, the overall profit $p_i+p_j+p_{ij}$ should equal the minimum receive power from~\eqref{eq:min-rec-power-two-freq}. %
However, it needs to be emphasized that these profits are different for each user~$u$, since the parameters $[\dmin[u], \dmax[u]]$ and $\hrx[u]$ may vary between users.
We therefore have the heterogeneous profits
\begin{align}
		p_{u, i} &= \min_{d\in[\dmin[u], \dmax[u]]} P_r(d, f_i) \label{eq:profit-pu-i}\\
		p_{u, ij} &= \min_{d\in[\dmin[u], \dmax[u]]} \underline{\sumpower}(d, f_i, f_j) - p_{u,i} - p_{u, j}\label{eq:profit-pu-ij}\,.
\end{align}
Since this is different compared to the standard definition of the \gls{qmkp}, we generalize it to include heterogeneous profits in the following.

\section{Quadratic Multiple Knapsack Problem with Heterogeneous Profits}\label{sec:qmkp-hp}
For the standard \gls{qmkp} with homogeneous profits, there are multiple (heuristic) algorithms available to find a solution~\cite{Chen2016,Aider2022,Garcia-Martinez2014,Sarac2007,Hiley2006}.
However, since the standard formulation does not include heterogeneous profits, these algorithms do not support it.
While generalizations to the \gls{qmkp} exist, e.g., considering groups of items~\cite{Sarac2014,Chen2016generalized}, they are not directly applicable to our considered problem.

Therefore, we introduce a novel generalization as \gls{qmkp-hp} in the following.

\begin{prob}[Quadratic Multiple Knapsack Problem with Heterogeneous Profits (\Gls{qmkp-hp})]\label{prob:qmkp-hp}
	Let $\frequencies=\{1, 2, \dots{}, N\}$ be a set of items (or objects) and $\users=\{1, 2, \dots{}, K\}$ be a set of knapsacks.
	Each item $i\in\frequencies$ has weight $w_i$ and can be assigned to at most one knapsack $u\in\users$.
	Each knapsack has a weight capacity $c_u$.
	If item $i$ is assigned to knapsack $u$, it yields the profit $p_{u,i}$.
	Additionally, each pair of items $i, j\in\frequencies$, $i\neq j$, has a joint profit $p_{u,ij}$, if both items are assigned to the same knapsack $u$.
	The set $\assignments_u \subset \frequencies$ describes the allocation of items to knapsack $u$.
	
	The objective of the \gls{qmkp-hp} is to find allocations $\assignments=\{\assignments_1, \dots{}, \assignments_K\}$ such that the overall profit is maximized,
	\begin{subequations}\label{eq:qmkp-hp-definition}
		\begin{alignat}{3}
			\max\quad & \sum_{u\in\users}\Bigg(\sum_{i\in\assignments_u} p_{u,i} &+&\sum_{\substack{j\in\assignments_u\\ j\neq i}} p_{u,ij}\Bigg)\\
			\mathrm{s.\,t.}\quad & \sum_{i\in\assignments_u} w_{i} \leq c_u & \quad & \forall u\in\users\\
			& \sum_{u=1}^{K} a_{iu} \leq 1  & & \forall\; 1\leq i \leq N
		\end{alignat}
	\end{subequations}
	where $a_{iu}=1$ if item $i$ is assigned to user $u$, i.e., if $i\in\assignments_u$, and $a_{iu}=0$ otherwise.
\end{prob}

\begin{rem}
	When the profits $P_u$ of the individual knapsacks are all the same, i.e., $P_1=P_2=\cdots{}=P_K$, the \gls{qmkp-hp} reduces to the standard \gls{qmkp}.
\end{rem}

In the following, we develop a greedy algorithm to solve Problem~\eqref{eq:qmkp-hp-definition}.
The algorithm is inspired by the constructive procedure from \cite{Aider2022} for the classical \gls{qmkp}, cf.~\cite[Alg.~1]{Aider2022}.

As an essential guideline in the algorithm, we use the common notion of value density~\cite{Aider2022,Hiley2006}.
This quantity describes the value of adding a particular item~$i$ to a knapsack~$u$ if it already has items $\set{S}$ assigned to it, i.e.,
\begin{equation}\label{eq:def-value-density}
	\vd_{u}(i, \set{S}) = \frac{1}{w_i}\left(p_{u,i} + \sum_{\substack{j\in \set{S}\\i\neq j}} p_{u,ij}\right)\,.
\end{equation}
For a more convenient notation, we define its matrix form as
\begin{equation}\label{eq:def-value-density-matrix}
	\vd(\set{S}) = \begin{pmatrix}
		\vd_1(1, \set{S}) & \vd_2(1, \set{S}) & \cdots{} & \vd_K(1, \set{S})\\
		\vd_1(2, \set{S}) & \vd_2(2, \set{S}) & \cdots{} & \vd_K(2, \set{S})\\
		\vdots{} & \vdots{} & \ddots{} & \vdots{}\\
		\vd_1(N, \set{S}) & \vd_2(N, \set{S}) & \cdots{} & \vd_K(N, \set{S})
	\end{pmatrix}\,.
\end{equation}

After an initialization, the main idea of the proposed Algorithm~\ref{alg:cp-qmkp-hp} at the bottom of the next page is the following.
We start with the combination of item~$i$ and knapsack~$u$ which yields the largest value density.
If there is still room for item~$i$ in knapsack~$u$, i.e., if $w_i \leq c_u$, we add the item to the knapsack.
If there is no room for it, we proceed with the next best combination of item and knapsack, until we can assign an item to a knapsack.
After an item has been assigned, it is removed from the list of unassigned items and the value densities for the remaining ones are updated.
These steps are repeated until there are no more items to assign or all remaining items' weights exceed the remaining capacities of all knapsacks.

\begin{rem}
	Algorithm~\ref{alg:cp-qmkp-hp} can also be used to complete a partial solution $\assignments^{(0)}$.
	We simply need to take $\assignments^{(0)}$ into account while initializing the set of unassigned items.
	If no initial assignments exist, we set $\assignments^{(0)}=\emptyset$.
\end{rem}

\begin{algorithm*}[b]
	\caption{Constructive procedure for solving the \gls{qmkp-hp} from Problem Statement~\ref{prob:qmkp-hp}}
	\label{alg:cp-qmkp-hp}
	\begin{algorithmic}[1]
		\State \textbf{Input:} Items $\frequencies$, Users $\users$, Profits $\profits$, Initial Assignments $\assignments^{(0)}$
		\State \textbf{Output:} Final Assignments $\assignments$
		\Procedure{ConstructiveProcedure}{$\frequencies$, $\users$, $\profits$, $\assignments^{(0)}$}
		\State Initialize unassigned items $\set{S}=\frequencies\setminus \bigcup_u\assignments_{u}^{(0)}$ \Comment{If there is no initial assignment, we have $\set{S}=\frequencies$}
		\State Initialize assignments $\assignments_u = \assignments_{u}^{(0)},\quad\forall u\in\users$ \Comment{If there is no initial assignment, we have $\assignments_{{u}}=\emptyset$}
		\State Calculate initial value densities $V = \vd(\set{S})$ \Comment{According to \eqref{eq:def-value-density-matrix}}
		\Repeat
		\State $\bar{V} = \Call{SortDescending}{V}$ \Comment{Sort $V$ in descending order and store it in $\bar{V}$, i.e., $\max V_{iu}=\bar{V}_1 \geq \bar{V}_2 \geq \dots{}$}
		\State Initialize counter $j=1$
		\Repeat
		\State $\bar{V}_j = V_{iu}$
		\If{$w_i \leq c_u$} \Comment{There is enough space in knapsack $u$}
		\State $\assignments_{{u}} = \assignments_{{u}} \cup \{i\}$ \Comment{Assign element $i$ to user ${u}$}
		\State $\set{S} = \set{S} \setminus \{i\}$ \Comment{Remove element $i$ from set of unassigned items $\set{S}$}
		\State $c_u = c_u-w_i$ \Comment{Adjust the remaining capacity $c_u$ of knapsack $u$}
		\Else
		\State $j=j + 1$ \Comment{Increase counter and proceed with value next in size}
		\EndIf
		\Until{(An item is assigned to a user or $j=\abs{\bar{V}}$)}
		\State $V_{iu} = \vd_{u}(i, \assignments_u),\;\forall u\in\users, i\in\set{S}$ \Comment{Update value densities for the remaining items}
		\Until{($\set{S}=\emptyset$ or $c_u < w_i,\;\forall u\in\users, i\in\set{S}$)}\Comment{All items are assigned or no more space in knapsacks}
		\State \Return $\assignments = \{\assignments_1, \dots{}, \assignments_K\}$
		\EndProcedure
	\end{algorithmic}
\end{algorithm*}

\section{Numerical Examples}\label{sec:numerical-example}
In this section, we demonstrate the effectiveness of our proposed solution to Problem Statement~\ref{prob:opt-prob} by numerical examples.
The source code to reproduce all of the following results can be found at~\cite{BesserGithub}.
All simulations are run in Python on an off-the-shelf laptop computer with Intel~i7-8565 CPU.

We compare different problem sizes with varying numbers of users~$K$ and frequencies~$N$.
For each setup, we uniformly select $N$ frequencies in the interval $\SIrange{2.4}{2.5}{\giga\hertz}$.
The height of the transmitter is fixed to $\htx=\SI{10}{\meter}$.
We then randomly generate $K$ user parameters with $\hrx[u]\sim\unif[1, 3]\si{\meter}$, $\dmin[u]\sim\unif[20, 40]\si{\meter}$, and $(\dmax[u]-\dmin[u])\sim\unif[10, 100]\si{\meter}$.
While we assume $N\geq 2K$ in our examples, this is not necessary.
However, in the case of $N<2K$, not every user gets two frequencies assigned.

Table~\ref{tab:performance-comparison} shows the performance of the proposed \gls{qmkp-hp} approach for different problem sizes.
The reported run times for Algorithm~\ref{alg:cp-qmkp-hp} are averages over \num{100} trials with minimum and maximum in parentheses.
The last two columns indicate the achieved average worst-case receive powers, i.e., the objective of \eqref{eq:opt-problem} divided by the number of users $K$.
For comparison, we additionally evaluated the performance for a random frequency assignment and three \gls{rr} schemes.
The first one (\gls{rr}-simple) starts by assigning user~1 $f_1$, user~2 $f_2$ and continues for each user.
In a second round, each user gets assigned a second frequency, i.e., at the end, user~$u$ is assigned $f_u$ and $f_{u+N}$.
The second \gls{rr} scheme (\gls{rr}-block) is similar to the first one but assigns two frequencies to each user at once, i.e., user~$u$ gets frequencies~$f_{2u-1}$ and $f_{2u}$.
As a final comparison, we add a \gls{rr} scheme that takes the user profits $P_u$ into account (\gls{rr}-profits).
With this, the users take turns selecting the frequency that yields the highest profit for them.

From the results in Table~\ref{tab:performance-comparison}, it can clearly be seen that the proposed algorithm consistently improves the average worst-case receive power by \SIrange{5}{6}{\decibel} over a random frequency assignment.
Additionally, it can be run in feasible times on standard hardware, even for larger problem sizes of \num{100}~frequency blocks and {\num{45}~users}.

In comparison to the \gls{rr} schemes, the proposed \gls{qmkp-hp} approach also shows gains in performance.
The worst performance is found for \gls{rr}-block, since each user is assigned neighboring frequencies, which does not effectively leverage the benefit of using two frequencies in parallel.
The \gls{rr}-simple scheme improves this in general.
However, depending on the number of available frequencies and users, the assigned frequencies can still be very close together.
In contrast, the profit-aware \gls{rr} algorithm achieves results close to the \gls{qmkp-hp} approach.
However, there is still a gap of up to \SI{2.8}{\decibel}, which is maximal in the case where $N$ is close to $2K$.
In these scenarios it can happen in the \gls{rr} scheme that users take frequencies before others, which would have yielded higher benefits for the overall system.

\begin{table*}[hbt]
	\caption{Minimum receive power comparison for the frequency assignment problem (averaged over 100 trials)}
	\label{tab:performance-comparison}
	\vspace*{-1em}
	\centering
	\begin{tabular}{rrcrrrrr}
		\toprule
		$\bm{K}$ & $\bm{N}$ & \textbf{Time Alg.~\ref{alg:cp-qmkp-hp} [$\si{\second}$]} & \textbf{Alg.~\ref{alg:cp-qmkp-hp} [\si{\decibel}]} & \textbf{Random [\si{\decibel}]} & \textbf{\Gls{rr}-Simple [\si{\decibel}]} & \textbf{\Gls{rr}-Block [\si{\decibel}]} & \textbf{\Gls{rr}-Profits [\si{\decibel}]}\\
		\midrule
		\num{3} & \num{10} & \num{0.0093} (\num{0.0089}--\num{0.0117}) & \num{-82.14} & \num{-86.87} & \num{-90.00} & \num{-99.40} & \num{-83.17}\\
		\num{3} & \num{100} & \num{0.8694} (\num{0.7691}--\num{0.9412}) & \num{-81.36} & \num{-87.23} &\num{-108.73} & \num{-113.03} & \num{-81.11}\\
		\num{10} & \num{100} & \num{2.9911} (\num{2.5636}--\num{3.3221}) & \num{-81.17} & \num{-87.21} & \num{-99.44} & \num{-112.60} & \num{-81.61} \\
		\num{20} & \num{50} & \num{1.4986} (\num{1.2925}--\num{1.6559}) & \num{-81.37} & \num{-87.25} & \num{-87.38} & \num{-110.41} & \num{-83.84} \\
		\num{20} & \num{100} & \num{6.1032} (\num{5.1878}--\num{6.4554}) & \num{-81.13} & \num{-87.58} & \num{-93.47} & \num{-112.71} & \num{-82.21}\\
		\num{45} & \num{100} & \num{14.4117} (\num{11.9171}--\num{14.7878}) & \num{-81.69} & \num{-87.56} & \num{-86.62} & \num{-112.70} & \num{-84.51}\\
		\bottomrule
	\end{tabular}
	\vspace*{-1em}
\end{table*}
\section{Conclusion}\label{sec:conclusion}
In this work, we have considered the problem of assigning frequencies to users in a two-ray ground reflection scenario with unknown distances between transmitter and receivers.

We have investigated the receive power in such systems and showed that using two frequencies in parallel can significantly improve the worst-case receive power over using only a single frequency.
In order to solve the optimization problem of assigning up to two frequencies to users for worst-case design, we introduced a generalization of the \gls{qmkp} with heterogeneous profits together with an algorithm to solve it.
Finally, we have illustrated the effectiveness of our approach by numerical examples.

While only two parallel frequencies are considered in this work, it could be extended to multiple frequencies in future work.
This is a promising research direction to further improve the reliability of the described communication systems with limited knowledge at the transmitter.
\appendices
\section{Proof of Lemma~\ref{lem:rec-power-lower-bound-two-freq}}\label{app:proof-rec-power-lower-bound}
The lower bound on $\sumpower$ from \eqref{eq:rec-power-two-freq} is calculated as the (lower) envelope of the function, which is given as the absolute value of its analytic function~\cite{Bracewell2000}.
However, we are only interested in bounding the oscillating part of $\sumpower$ given by the cosine terms as
\begin{equation*}
	s = \frac{\cos\left(\frac{\omega_1}{c}(\lenref-\len)\right)}{\omega_1^2} + \frac{\cos\left(\frac{\omega_2}{c}(\lenref-\len)\right)}{\omega_2^2}\,.
\end{equation*}
The analytic function of $s$ is given as $s+\imag \hat{s}$, where $\hat{s}=\hilbert\{s\}$ is the Hilbert transform $\hilbert$ of $s$.
The envelope of $s$ is then given as $\abs{s+\imag\hat{s}}$.
With the correspondence $\hilbert\{\cos(\omega t)\}=\sin(\omega t)$~\cite{King2009hilbert}, we obtain the analytic signal
\begin{equation*}
	s + \imag\hat{s} = \frac{\cos\left({\omega_1 t}\right)}{\omega_1^2} + \frac{\cos\left({\omega_2}t\right)}{\omega_2^2} + \imag\left(\frac{\sin\left({\omega_1}t\right)}{\omega_1^2} + \frac{\sin\left({\omega_2}t\right)}{\omega_2^2}\right)
\end{equation*}
where we use the shorthand $t=\frac{\lenref-\len}{c}$.
The absolute value can then be calculated as
\begin{equation*}
	\abs{s+\imag\hat{s}}^2 = \frac{1}{\omega_1^4} + \frac{1}{\omega_2^4} + \frac{2 \cos\left((\omega_2 - \omega_1)t\right)}{\omega_1^2 \omega_2^2}\,.
\end{equation*}
Applying the definitions of $t$ and $\dw=\omega_2-\omega_1$ and substituting the envelope of $s$ into \eqref{eq:rec-power-two-freq}, we obtain \eqref{eq:rec-power-sum-lower-bound}.
\printbibliography
\end{document}